\newtheorem{claim}{Claim}
\newtheorem{lemma}[claim]{Lemma}
\newtheorem{theorem}[claim]{Theorem}
\newtheorem*{theorem*}{Theorem}
\title{\textbf{A Bottom-Up Algorithm for Negative-Weight \\ SSSP with Integrated Negative Cycle Finding}}
\author{
    Jason Li\thanks{Carnegie Mellon University, \texttt{jmli@cs.cmu.edu}} \and
    Connor Mowry\thanks{Carnegie Mellon University, \texttt{cmowry@andrew.cmu.edu}}
}
\date{}
\begin{document}

\maketitle

\begin{abstract}
We present a simplified algorithm for solving the Negative-Weight Single-Source Shortest Paths (SSSP) problem, focusing on enhancing clarity and practicality over prior methods. Our algorithm uses graph diameter as a recursive parameter, offering greater robustness to the properties of the decomposed graph compared to earlier approaches. Additionally, we fully integrate negative-weight cycle finding into the algorithm by augmenting the Bellman-Ford/Dijkstra hybrid, eliminating the need for a separate cycle-finding procedure found in prior methods. Although the algorithm achieves no theoretical efficiency gains, it simplifies negative cycle finding and emphasizes design simplicity, making it more accessible for implementation and analysis. This work highlights the importance of robust parameterization and algorithmic simplicity in addressing the challenges of Negative-Weight SSSP.
\end{abstract}

\section{Introduction}

The Single-Source Shortest Paths (SSSP) problem with negative edge weights is a cornerstone of algorithmic graph theory, with applications spanning network optimization, logistics, and resource allocation. An important extension of this problem involves handling graphs that may contain negative-weight cycles, where the goal is to either return a shortest path tree or find a negative-weight cycle. Both outputs serve as verifiable certificates, ensuring correctness and reliability.

Scaling has been a foundational technique in shortest path algorithms since the 1990s~\cite{goldberg95}. Initially, it was used to normalize edge weights, reducing the problem to one where all negative edges have weight exactly \(-1\). More recently, scaling has been applied to transform cycle properties, ensuring that non-negative-weight cycles become significantly more positive. This transformation enables efficient computation of SSSP on the scaled graph. Algorithms such as those in the Bernstein-Nanongkai-Wulff-Nilsen~\cite{bernstein2023negativeweightsinglesourceshortestpaths} framework have integrated scaling with advanced decomposition techniques, achieving near-linear time complexity for solving the Negative-Weight SSSP problem.

Building on this foundation, Bringmann, Cassis, and Fischer~\cite{bringmann2023negativeweightsinglesourceshortestpaths} extended the existing framework, achieving a nearly 6-log-factor improvement over~\cite{bernstein2023negativeweightsinglesourceshortestpaths}. Their work set a new standard for solving Negative-Weight SSSP, providing both theoretical advancements and practical insights.

In this paper, we build upon the techniques introduced by~\cite{bringmann2023negativeweightsinglesourceshortestpaths}, presenting a \textit{bottom-up algorithm} inspired by~\cite{fischer2025simple} with the following contributions:
\begin{enumerate}
    \item \textbf{Diameter-Based Decomposition Parameter}: Our algorithm employs the \emph{diameter} of the non-negative graph as a parameter in our recursive decomposition, similar to the bottom-up approach of~\cite{fischer2025simple}. Compared to~\cite{bernstein2023negativeweightsinglesourceshortestpaths} and~\cite{bringmann2023negativeweightsinglesourceshortestpaths}'s parameterization of the maximum number of negative-weight edges on any shortest path, this parameterization is more natural and, more importantly, avoids the issue of undetected failures in the presence of negative-weight cycles. In particular, we bypass the noisy binary search analysis of~\cite{bringmann2023negativeweightsinglesourceshortestpaths}, which they must perform to handle the case of undetected failures.
    \item \textbf{Integrated Negative Cycle Finding}: The diameter-based decomposition allows us to augment the Bellman-Ford/Dijkstra hybrid algorithm to look for negative-weight cycles on the fly. In other words, we fully integrate negative-weight cycle finding into our algorithm, instead of designing a separate procedure as in~\cite{bringmann2023negativeweightsinglesourceshortestpaths}.
\end{enumerate}

Although our algorithm does not improve upon the theoretical efficiency of~\cite{bringmann2023negativeweightsinglesourceshortestpaths}, it emphasizes simplicity in design and implementation, making it an accessible and practical alternative for solving the Negative-Weight SSSP problem.

\begin{theorem}\label{thm:main}
Consider a graph with integral edge weights that are at least $-W$. There is a Las Vegas algorithm that solves Negative-Weight SSSP in $O((m+n\log\log n)\log^2n\log(nW))$ time with high probability.
\end{theorem}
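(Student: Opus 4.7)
The plan is to combine the classical weight-scaling reduction with a diameter-parameterized recursive decomposition, so that each scaled subproblem can be solved by a Bellman-Ford/Dijkstra hybrid whose cycle-detection test is driven by the diameter promise itself.

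First I would apply Goldberg's scaling technique: in $O(\log(nW))$ rounds, reduce the original problem to that many calls of a ``scaled'' subproblem where the task is to compute a price function making all edges nonnegative, or else output a negative-weight cycle. Each round uses the previous price function and adds one bit of precision to the scaled weights; correctness of this reduction is standard, and the total work is the per-round cost times $O(\log(nW))$.

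Next, I would solve each scaled subproblem by a recursive routine $\textsc{Solve}(G,D)$ whose input promise is that the diameter of the nonnegative-edge subgraph of $G$ is at most $D$. The base case $D=O(1)$ is handled by a constant-round Bellman-Ford with Dijkstra settling nonnegative substeps. For larger $D$, I would invoke a low-diameter decomposition to cut $G$ into clusters of induced nonnegative-diameter $\le D/2$, with only an expected $\widetilde{O}(m/D)$ edges crossing clusters. Recurse inside each cluster to obtain a local price function, form the contracted graph on the cluster supernodes, and run one Bellman-Ford/Dijkstra hybrid to merge the pieces. Since $D$ halves at every level, the recursion depth is $O(\log n)$.

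The negative-cycle detection is woven directly into the hybrid: because the diameter promise upper-bounds the length of any simple shortest path in $G_{\ge 0}$, any vertex whose tentative distance drops by more than that bound during a single hybrid phase must close a negative-weight cycle through its predecessor chain, which can be extracted in $O(n)$ time by walking back until a vertex repeats. This replaces the separate cycle-finding subroutine of \cite{bringmann2023negativeweightsinglesourceshortestpaths} and sidesteps their noisy binary search. For the cost, a single hybrid phase costs $O(m + n\log\log n)$ with a $\log\log n$-factor priority queue, there are $O(\log n)$ phases per recursion level, $O(\log n)$ levels of recursion, and $O(\log(nW))$ scaling rounds, giving $O((m+n\log\log n)\log^2 n \log(nW))$ in total. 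I expect the main obstacle to be the correctness proof of the integrated cycle test: one must show that the diameter-driven threshold fires exactly when a negative cycle is present at the current scale, never spuriously when it is absent, and that this guarantee composes cleanly across all $O(\log n)$ recursion levels with high probability over the randomness of the low-diameter decomposition.
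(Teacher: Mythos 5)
Your high-level plan (scaling reduction, diameter-parameterized recursive decomposition, a Bellman-Ford/Dijkstra hybrid with cycle detection driven by the diameter promise) matches the paper's architecture, but two of the steps you leave as "the main obstacle" are precisely where the real work lies, and as stated they do not go through. First, your cycle test is both mis-specified and its extraction is unsound. The correct trigger is not "the tentative distance drops by more than the diameter bound in one phase" but rather: the current path $P$ ending at some vertex, which has $w_H(P)\le 0$, accumulates auxiliary length $w_{H_{\ge 0}}(P)>d$ (the paper threads $w_{H_{\ge 0}}$ through the hybrid as a second coordinate for exactly this purpose). More importantly, when the test fires, the offending path need not be non-simple and its predecessor chain need not contain a repeated vertex, so "walk back until a vertex repeats" can simply fail to produce a cycle. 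The cycle has to be closed \emph{externally}: one runs Dijkstra in $G'_{\ge 0}$ from the head of $P$ back to its tail, uses the inductively established fact that $V(H)$ has diameter at most $d$ in $G'_{\ge 0}$ (this is where the progress guarantee of the decomposition enters, via in-/out-balls of radius $d/4$), and then an unscaling argument (each negative edge of $H$ loses an additional $W/2$ when mapped back to $G$, so $w_G(P)<-d$ while the closing path costs at most $d$) certifies that the concatenation is a negative cycle in the \emph{original} graph $G$. None of this is implied by a distance-drop threshold.

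Second, your running-time accounting has a gap. You bound the decomposition by "$\widetilde{O}(m/D)$ crossing edges," but that global bound does not control the number of hybrid phases; what is needed is the \emph{sparse hitting} property that any path of $G_{\ge 0}$-length at most $d$ meets the cut set in expected $O(\log n)$ edges, which (after the recursive reweighting makes intra-SCC and DAG edges non-negative) bounds the expected number of negative edges on a shortest path and hence the number of phases. Moreover, when a negative cycle is present, shortest paths may fail to exist, so "$O(\log n)$ phases per level" needs a separate argument: the paper constructs a reference path of very negative weight $M=-(d+1)|V(H)|W$, shows any such path must exceed auxiliary length $d$, and argues the hybrid triggers the cycle test within expected $O(\log n)$ iterations. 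Finally, a negative cycle consisting entirely of non-positive edges is never cut by the decomposition (only positive-weight edges are removed) and is never caught by the hybrid at an internal node; it must be detected by a separate check for negative edges at the leaf nodes, which your scheme omits. Until these three points are filled in, the proposal is an outline of the right shape rather than a proof.
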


\subsection{Preliminaries}

Let $G=(V,E,w_G)$ be a directed graph with integral weight function $w_G:E\to\mathbb Z$.
A \emph{potential function} $\phi:V\to\mathbb R$ is a function on the vertices used to reweight the graph: the reweighted graph $G_\phi$ satisfies $w_{G_\phi}(uv)=w_G(uv)+\phi(u)-\phi(v)$ for all edges $uv\in E$. A result of Johnson~\cite{johnson1977efficient} reduces the Negative-Weight SSSP problem to computing a potential function $\phi$ such that all edge weights of $G_\phi$ are non-negative, after which Dijkstra's algorithm can be run on $G_\phi$ to recover the single-source shortest paths in $G$.

The iterative \emph{scaling technique} of~\cite{bernstein2023negativeweightsinglesourceshortestpaths} does not compute such a potential function in one go. Instead, if the graph $G$ has all edge weights at least $-W$, then the goal is to either compute a potential function $\phi$ such that $G_\phi$ has all edge weights at least $-W/2$, or output a negative-weight cycle; we call this problem \textsc{Scale}.~\cite{bernstein2023negativeweightsinglesourceshortestpaths} show that Negative-Weight SSSP reduces to $O(\log(nW))$ iterations of \textsc{Scale} when the original graph has all edge weights at least $-W$. Moreover,~\cite{bringmann2023negativeweightsinglesourceshortestpaths} show that if each iteration of \textsc{Scale} runs in $T$ time \emph{in expectation}, then Negative-Weight SSSP can be computed in $O(T\log(nW))$ time \emph{with high probability}\footnote{\emph{With high probability} means with probability at least $1-1/n^C$ for arbitrarily large constant $C>0$.} (and with Las Vegas guarantee) by restarting any runs of \textsc{Scale} that exceed $2T$ time and applying a Chernoff bound. Hence, to obtain Theorem~\ref{thm:main}, we only focus on the task of solving \textsc{Scale} in $O((m+n\log\log n)\log^2n)$ expected time. It even suffices to solve \textsc{Scale} in this time with high probability, since the solution can be easily checked for correctness and \textsc{Scale} can be restarted if incorrect.

We need three main subroutines from~\cite{bernstein2023negativeweightsinglesourceshortestpaths,bringmann2023negativeweightsinglesourceshortestpaths}. The first is the \textsc{Decompose} subroutine introduced in~\cite{bernstein2023negativeweightsinglesourceshortestpaths} and refined in~\cite{bringmann2023negativeweightsinglesourceshortestpaths}. Define the in-ball/out-ball of radius $r$ centered at $v$\ as the set of vertices reachable to/from  $v$ by a path of weight at most $r$, respectively. Given a graph $G$, define the graph $G_{\ge 0}$ as the graph $G$ with all negative-weight edges reweighted to $0$. We require the following guarantee from~\cite{bringmann2023negativeweightsinglesourceshortestpaths}:
\begin{lemma}[Lemma~8 of~\cite{bringmann2023negativeweightsinglesourceshortestpaths}]\label{lem:decompose}
There is a randomized algorithm \textsc{Decompose}$(G,d)$ running in expected time $O((m+n\log\log n)\log n)$ that computes a set $S\subseteq E$ of positive-weight edges such that: 
\begin{enumerate}
\item \textup{Progress:} With high probability, for any strongly connected component $C$ in $G\setminus S$, either (i) $|C|\le\frac34|V|$ or (ii) for each vertex $v\in C$, both the in-ball and out-ball of radius \( \frac{d}{4} \) centered at \( v \) contain more than \( \frac{|V|}{2} \) vertices.
\item \textup{Sparse Hitting:} For any (potentially non-simple) path $P$ of length at most $d$ in $G_{\ge 0}$, the number of edges of $P$ inside $S$ (counting multiplicity) has expectation at most $O(\log n)$.
\end{enumerate}
\end{lemma}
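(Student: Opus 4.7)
My plan is to adapt the classical random-radius low-diameter decomposition (in the style of Bernstein--Nanongkai--Wulff-Nilsen) to the directed weighted setting. The algorithm has two phases: a classification phase that labels each vertex as light or heavy based on its in/out-ball sizes, and a cutting phase that grows random-radius balls around the light vertices and places their positive-weight boundary edges into $S$.

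For classification, I would sample $\Theta(\log n)$ pivot vertices and run a bounded Dijkstra out to radius $d/4$ from each pivot in both directions. For every vertex $v$, the number of pivots falling into $B_{\mathrm{in}}(v, d/4)$ and $B_{\mathrm{out}}(v, d/4)$ provides a constant-factor estimate of the true ball sizes by Chernoff plus a union bound. Call $v$ \emph{in-light} if $|B_{\mathrm{in}}(v, d/4)| \le |V|/2$, \emph{out-light} if $|B_{\mathrm{out}}(v, d/4)| \le |V|/2$, and otherwise \emph{heavy}. Using Thorup's heap to make each bounded Dijkstra cost $O(m + n \log\log n)$, the classification phase fits into the $O((m + n \log\log n) \log n)$ budget.

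For the cutting phase, process each light vertex independently. If $v$ is in-light, draw a truncated geometric radius $R_v \le d/4$ with success probability $p = \Theta(\log n / d)$, grow $B_{\mathrm{in}}(v, R_v)$ via bounded Dijkstra, and add to $S$ every positive-weight edge crossing into the ball; do the symmetric construction for out-light vertices. Progress then follows: if an SCC $C$ of $G \setminus S$ contains an in-light $v$, any $u \in C$ must reach $v$ in $G \setminus S$ along a path that cannot cross the cut, so $u \in B_{\mathrm{in}}(v, R_v) \subseteq B_{\mathrm{in}}(v, d/4)$, forcing $|C| \le |V|/2 \le \tfrac{3}{4}|V|$; symmetrically for out-light; and if $C$ contains no light vertex then every $v \in C$ is heavy and condition (ii) holds trivially.

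The main obstacle I anticipate is the sparse-hitting bound, because $P$ may be \emph{non-simple} and repeatedly re-enter a single light vertex's ball, so a naive union bound overcounts. The key tool is the memoryless property of the truncated geometric radius: conditioned on $R_v$ reaching any threshold, the probability of landing in a specific unit-width interval beyond it is at most $p$. I would walk along $P$ and, for each positive-weight edge $e \in P$, bound the probability that $e$ sits on the boundary of some light vertex's ball via this memoryless observation, then telescope over the edges of $P$ using the fact that $P$ has total weight at most $d$ in $G_{\ge 0}$, yielding $O(p \cdot d) = O(\log n)$ expected cut edges on $P$. A secondary challenge is implementing the cutting phase within the stated time bound without a full Dijkstra per light vertex, which I would handle by batching ball growths around the pivot explorations and using the exponential tail of the truncated geometric radii to amortize the total work.
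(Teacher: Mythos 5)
This lemma is not proved in the paper at all: it is imported verbatim from Bringmann--Cassis--Fischer (their Lemma~8), with only a remark that the modified progress condition \emph{(ii)} is established in the proof of their Lemma~22. So your sketch has to be judged against that construction, and it deviates from it in a way that breaks the two quantitative guarantees. The critical flaw is growing a truncated-geometric ball \emph{independently around every light vertex} and cutting all of their boundaries. For a fixed edge $e=(x,y)$ of $P$, the event $e\in S$ is a union over all light vertices $v$ of the event $\mathrm{dist}_{G_{\ge 0}}(y,v)\le R_v<\mathrm{dist}_{G_{\ge 0}}(x,v)$; memorylessness bounds each single term by roughly $p\cdot w_{G_{\ge 0}}(e)$, but there can be $\Theta(n)$ light vertices at (near-)zero distance from $y$ in $G_{\ge 0}$ (zero-weight edges make this easy to arrange), so the union drives $\Pr[e\in S]$ up to a constant. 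A path of $\Theta(d)$ unit-weight edges then suffers $\Omega(d)$ expected cut edges, not $O(\log n)$: your ``telescope over the edges of $P$'' controls the per-ball interval lengths but never addresses the sum over balls. The scheme in BNW/BCF avoids exactly this by carving sequentially: pick one light vertex, grow its ball in the \emph{current} graph, cut its boundary, delete the ball, and continue; then each edge is effectively exposed to a single radius choice (the first ball whose exploration reaches one of its endpoints), which is the only place the memoryless argument is valid.

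The same independence also sinks the running time: ``light'' only guarantees a ball of size at most $|V|/2$, so independently grown balls can overlap massively and the total exploration work is the sum of ball volumes, which can be $\Omega(nm)$; the proposed batching/amortization does not repair this, whereas sequential carving with deletion charges each exploration to a part of the graph that is then removed. Finally, even the corrected sequential scheme, analyzed as in Bernstein--Nanongkai--Wulff-Nilsen, gives $O(\log^2 n)$ expected hits per short path; obtaining the $O(\log n)$ bound stated in the lemma is precisely the refinement contributed by Bringmann--Cassis--Fischer, which your sketch does not engage with. The parts of your proposal that do work are the sampling-based classification (modulo the standard slack between estimated and true ball sizes, which is why the lemma states $\frac34|V|$ rather than $\frac12|V|$ in condition \emph{(i)}) and the progress argument, including the correct observation that any edge entering an in-ball from outside must have positive weight.
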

We remark that Lemma~8 of~\cite{bringmann2023negativeweightsinglesourceshortestpaths} proves a different condition~\emph{(ii)} for progress that depends on a parameter $\kappa$ that they introduce. However, our new condition~\emph{(ii)} also holds and is mentioned in the proof of their Lemma~22.

The second subroutine is a Bellman-Ford/Dijkstra hybrid algorithm for Negative-Weight SSSP, also introduced in~\cite{bernstein2023negativeweightsinglesourceshortestpaths} and refined in~\cite{bringmann2023negativeweightsinglesourceshortestpaths}. For convenience, we do not mention a source vertex in the input to \textsc{BellmanFordDijkstra}. Instead, the goal is to compute, for each vertex $v\in V$, the shortest path ending at $v$ (and starting from anywhere). Note that the shortest path has weight at most $0$ since the empty path at $v$ of weight $0$ is a possible choice. To frame this problem as SSSP, simply add a source vertex $s$ with zero-weight edges to each vertex in $V$, and call SSSP on the new graph $G_s$ with source $s$. For further convenience, we also integrate a potential function $\phi:V\to\mathbb R$ directly into \textsc{BellmanFordDijkstra}, which runs the Bellman-Ford/Dijkstra hybrid on $(G_s)_\phi$ instead (with the extension $\phi(s)=0$), and quickly finds single-source shortest paths with few negative-weight edges in $(G_s)_\phi$. These single-source shortest paths in $(G_s)_\phi$ correspond exactly to single-source shortest paths in $G_s$, which in turn correspond to shortest paths in $G$ ending at each vertex.

In the version below, we also include the algorithm's guarantees after each iteration $i$ to allow for early termination if a negative-weight cycle is detected. In addition, we also maintain the weights of our current shortest paths under a separate, auxiliary weight function $w'$. This can be implemented by recording the distance of each current path $P$ as a tuple $(w_G(P),w'(P))$, but only using the first coordinate for comparisons.

\begin{lemma}[Lemmas 25 and 62 of~\cite{bringmann2023negativeweightsinglesourceshortestpaths}]\label{lem:BFD}
Let $G=(V,E,w_G)$ be a directed graph and $\phi$ be a potential function. For each vertex $v\in V$, let $\eta_{G,\phi}(v)$ be the minimum number of edges of negative weight in $G_\phi$ among all shortest paths in $G$ ending at $v$. There is an algorithm \textsc{BellmanFordDijkstra}$(G,\phi)$ that computes a shortest path ending at each vertex in time $O(\sum_v(\deg(v)+\log\log n)\cdot\eta_{G,\phi}(v))$.

The first $i$ iterations of the algorithm runs in $O((m+n\log\log n)\cdot i)$ time and computes, for each vertex $v\in V$, the minimum weight $d_i(v)$ of a path $P_i(v)$ from $s$ to $v$ among those that contains less than $i$ negative-weight edges. Moreover, given input auxiliary weights $w'$ on the edges (independent of the weights $w_G$ that determine shortest paths), the algorithm can also compute the weight $d_i'(v)$ under $w'$ of some such path $P_i(v)$.
\end{lemma}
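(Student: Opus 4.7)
The plan is to build the algorithm as a sequence of iterations, where iteration $i$ extends the set of candidate paths by allowing one additional negative-weight edge. After augmenting $G$ with the source $s$ and setting $\phi(s)=0$, work entirely in the reweighted graph $(G_s)_\phi$. Iteration $1$ runs Dijkstra from $s$ on the non-negative subgraph of $(G_s)_\phi$, producing $d_1(v)$, the minimum weight of an $s$-to-$v$ path using zero negative edges (i.e., fewer than $1$). Iteration $i\ge 2$ begins from the distances $d_{i-1}$, performs one Bellman--Ford-style relaxation across every negative-weight edge of $(G_s)_\phi$, and then reseeds Dijkstra on the non-negative subgraph from every vertex whose distance strictly improved. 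Correctness of the characterization ``$d_i(v)$ is the minimum weight of an $s$-$v$ path with fewer than $i$ negative edges'' follows by induction on $i$: any such path splits as a $(<i{-}1)$-negative-edge prefix, one negative edge, and a non-negative suffix, each captured by the corresponding stage.

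For the total time bound, the key observation is that a vertex $v$ has $d_i(v)<d_{i-1}(v)$ only if the new paths introduced at iteration $i$ (those with exactly $i-1$ negative edges) beat all shorter-negative-edge alternatives, and this strict decrease occurs in at most $O(\eta_{G,\phi}(v))$ iterations since $d_i(v)$ stabilizes by $i=\eta_{G,\phi}(v)+1$. Using Thorup's priority queue for integer-weighted Dijkstra, every heap operation costs $O(\log\log n)$ amortized, so the total work that a vertex $v$ contributes when it is strictly relaxed and reinserted is $O(\deg(v)+\log\log n)$. Summing across the $O(\eta_{G,\phi}(v))$ iterations that touch $v$ yields the advertised $O(\sum_v(\deg(v)+\log\log n)\cdot\eta_{G,\phi}(v))$. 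For the partial guarantee, bound each iteration trivially by one Bellman--Ford sweep plus one full Dijkstra, costing $O(m+n\log\log n)$, so the first $i$ iterations run in $O((m+n\log\log n)\cdot i)$.

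To handle the auxiliary weight $w'$ alongside $d_i(v)$, maintain $d_i'(v)$ as the $w'$-weight of the current path witnessing $d_i(v)$. Whenever a relaxation writes $d_i(v)\gets d_{i-1}(u)+w_{G_\phi}(u,v)$, simultaneously write $d_i'(v)\gets d_{i-1}'(u)+w'(u,v)$; the same rule applies along every Dijkstra tree edge. Because heap comparisons and relaxation tests use only the first coordinate, the chosen path is unaffected, so $d_i'(v)$ is exactly the $w'$-weight of some $P_i(v)$ realizing $d_i(v)$. Exposing the per-iteration invariants enables a caller to read $d_i$ and $d_i'$ after any number of iterations and thus terminate early when a negative-weight cycle is detected.

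The main obstacle is the amortized charging in the second paragraph: one must implement the Dijkstra phase of iteration $i$ so that it only explores edges out of vertices whose $d_{i-1}$ strictly improved in the previous iteration (rather than re-sweeping the whole graph), and one must argue that any vertex added to the heap in iteration $i$ has in fact been strictly improved. Without this care, every iteration would cost $\Theta(m+n\log\log n)$, and the global bound would degrade to $O((m+n\log\log n)\cdot\max_v\eta_{G,\phi}(v))$ instead of the claimed vertex-by-vertex sum; making the seeding and the heap-initialization coincide exactly with the set of strictly improved vertices is the technical crux.
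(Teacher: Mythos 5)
The paper never proves this lemma---it is imported as a black box (Lemmas 25 and 62 of~\cite{bringmann2023negativeweightsinglesourceshortestpaths}), so there is no in-paper proof to compare against; your proposal is judged on its own. Your outline does reconstruct the standard interleaving correctly: Dijkstra on the non-negative part of $(G_s)_\phi$, one Bellman--Ford sweep over the negative edges per iteration, the inductive characterization of $d_i(v)$ as the best weight over paths with fewer than $i$ negative edges, stabilization of $d_i(v)$ once $i>\eta_{G,\phi}(v)$, Thorup's $O(\log\log n)$ priority queue, and the auxiliary weight $w'$ carried as a second coordinate that never enters comparisons (exactly how this paper says it will use the lemma).

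The problem is that, by your own admission, the headline bound $O\bigl(\sum_v(\deg(v)+\log\log n)\cdot\eta_{G,\phi}(v)\bigr)$ is not actually proved: you state that the Dijkstra phase of iteration $i$ must be seeded only from vertices whose distance strictly improved, and that every vertex incurring heap or edge-scan work in that phase must itself have strictly improved, but you give neither the implementation nor the argument. That charging scheme is precisely the refinement that distinguishes the cited Lemma 25 from the classical analysis (which only yields $O((m+n\log\log n)\cdot\max_v\eta_{G,\phi}(v))$), so flagging it as ``the technical crux'' without resolving it leaves the main claim unestablished. If you flesh it out, two further points need care: (i) the correctness induction requires the invariant that at the end of every iteration all non-negative edges are relaxed, i.e.\ $d(v)\le d(u)+w_{(G_s)_\phi}(uv)$ for every non-negative edge $uv$; this is what guarantees that reseeding only from improved vertices still captures a path whose non-negative suffix passes through unimproved vertices, and your ``each captured by the corresponding stage'' glosses over it. (ii) Your stabilization argument bounds the number of improving iterations by $\eta_{G,\phi}(v)+1$ and tacitly assumes shortest paths exist (finite $\eta$); the per-iteration guarantee in the second paragraph of the statement, which is what this paper actually exploits for early termination on negative cycles, should be argued independently of that assumption.
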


The third subroutine is given a graph with its strongly connected components and a potential function, and fixes the DAG edges outside the SCCs while keeping the edge weights inside the SCCs non-negative.
\begin{lemma}[Lemma 3.2 of~\cite{bernstein2023negativeweightsinglesourceshortestpaths}]\label{lem:fix-dag}
Consider a graph $G$ with strongly connected components $C_1,\ldots,C_\ell$ and a potential function $\phi$, and suppose that all edges inside the SCCs have non-negative weight in $G_\phi$. There is an $O(n+m)$ time algorithm that adjusts $\phi$ so that all edge weights in $G_\phi$ are now non-negative.
\end{lemma}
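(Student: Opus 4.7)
The plan is to adjust $\phi$ by an offset that is constant on each strongly connected component. Let $\psi_1,\ldots,\psi_\ell\in\mathbb{R}$ be numbers to be determined, and define $\phi'(v)=\phi(v)+\psi_j$ for every $v\in C_j$. For an intra-SCC edge $uv$ with $u,v\in C_j$, the new reweighted weight $w_{G_{\phi'}}(uv)=w_{G_\phi}(uv)+\psi_j-\psi_j$ is unchanged, so it remains non-negative by hypothesis. For an inter-SCC edge $uv$ with $u\in C_i,v\in C_j$ ($i\neq j$), the new weight equals $w_{G_\phi}(uv)+\psi_i-\psi_j$, so it suffices to choose the $\psi$'s so that $\psi_j\le\psi_i+w_{G_\phi}(uv)$ for every such edge.

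The key observation is that these constraints are precisely the triangle inequalities for shortest paths in the condensation DAG. I would form the DAG $H$ whose vertices are $C_1,\ldots,C_\ell$ and whose edges are the inter-SCC edges of $G$, with weights inherited from $G_\phi$ (keeping parallel edges, or just the minimum among them). Add a super-source $s$ with weight-$0$ edges into each $C_j$, and let $d(C_j)$ be the shortest-path distance from $s$ to $C_j$ in $H\cup\{s\}$. Because the underlying graph is acyclic, $d$ is well-defined even in the presence of negative weights, and setting $\psi_j:=d(C_j)$ yields $\psi_j\le\psi_i+w_{G_\phi}(uv)$ for every DAG edge by the triangle inequality, exactly the condition above.

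For the runtime, the condensation $H$ is constructed in $O(n+m)$ by a single linear scan that sends each edge either to its SCC (where it is ignored) or to the corresponding super-edge of $H$; the list of SCCs is part of the input, so no extra SCC computation is needed. A topological sort of $H$ followed by relaxing the edges in topological order computes all $d(C_j)$ values in $O(n+m)$ time, even with negative weights, since the DAG structure eliminates any possibility of a negative cycle in $H$. Finally, the update $\phi(v)\leftarrow\phi(v)+d(C_{j(v)})$ is $O(n)$, giving $O(n+m)$ overall.

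The main obstacle is really just a sanity check rather than a difficulty: verifying that using one common offset per SCC leaves every intra-SCC edge untouched (the $\psi$ terms telescope immediately), and that parallel edges between the same pair of SCCs are handled correctly by relaxing each of them once in $H$ (the tightest edge automatically becomes the binding constraint). With those points in place, the proof reduces to the textbook fact that shortest-path distances form a valid potential function, applied here to the acyclic condensation of $G$.
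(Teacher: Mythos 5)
Your proof is correct and matches the standard argument for this lemma (the paper itself gives no proof, citing Lemma 3.2 of Bernstein--Nanongkai--Wulff-Nilsen, whose proof is exactly this): a per-SCC constant offset given by shortest-path distances from a super-source in the condensation DAG, computed by relaxation in topological order in $O(n+m)$ time. No gaps; the intra-SCC telescoping and the per-edge triangle-inequality check are precisely what is needed.
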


\section{The Scale Algorithm}

Recall the specifications of the \textsc{Scale} problem: given an input graph \( G = (V, E, w_G) \) with edge weights \( w_G(e) \geq -W \) for all \( e \in E \), either return a reweighted graph \( G_\phi \) with weights at least \( -W/2 \) or output a negative-weight cycle in \( G \). We present an algorithm that runs in expected time \( O\left( (m + \log \log n) \log^2 n \right) \) and succeeds with high probability, which is sufficient to prove Theorem~\ref{thm:main}. 

\subsection{Description}

The algorithm operates in two main phases: (1) recursively decomposing a scaled version of the input graph, and (2) iteratively computing distances on the decomposition.

\subsubsection{Phase 1: Recursive Decomposition}

Let \( G = (V, E, w_G) \) be the input graph. We define a scaled graph \( G' = (V, E, w_{G'}) \) by increasing the weight of every edge by $W/2$, i.e., $w_{G'}(e)=w_G(e)+W/2$ for all edges $e\in E$.

The decomposition process uses the \textsc{Decompose} algorithm from~\cite{bringmann2023negativeweightsinglesourceshortestpaths}, which, given a graph and a diameter parameter \( d \), returns a set of edges whose removal ensures that the remaining strongly connected components (SCCs) are either small in size (condition~\emph{(i)} of Lemma~\ref{lem:decompose}) or have diameter at most \( d/2 \) (which we will deduce from condition \emph{(ii)}). Starting with \( d_0 = nW/2 \), we decompose \( G' \) recursively. For each subgraph \( H \) with parameter \( d \), we compute the set of cut edges \( S_{(H,d)} = \textsc{Decompose}(H,d) \) and remove these edges to obtain SCCs \( C_1, C_2, \dots, C_\ell \). For each SCC \( C_i \), the diameter parameter \( d_i \) is set to \( d \) if \( |C_i| \leq \frac{3}{4}|V(H)| \), or \( d/2 \) otherwise. If \( d_i > W/2 \), we recursively decompose \( C_i \); otherwise, \( C_i \) becomes a leaf node.

Once the decomposition is complete, we check each leaf node for the presence of a negative edge \( uv \). If such an edge exists, then we claim there must be a negative-weight cycle in $G$. To find a negative-weight cycle, we run Dijkstra's algorithm from \( v \) in \( G'_{\geq 0} \) (where negative-weight edges in \( G' \) are replaced by edges of weight 0) to find a path to \( u \) of weight at most $W/2$ in $G'_{\ge 0}$; we will show that such a path must exist. Note that this path also has weight at most $W/2$ in $G$, where edge weights are only smaller. The algorithm then outputs the cycle formed by concatenating this path with the edge \( uv \), which is a negative-weight cycle since the edge $uv$ has negative weight in $G'$ and hence weight less than $-W/2$ in $G$.

At this point, if the algorithm does not terminate early with a negative-weight cycle, then all edges in leaf nodes are non-negative.

\subsubsection{Phase 2: Iterative Distance Computation}

In this phase, we iteratively compute distance estimates on the decomposition tree, starting from the leaves and moving up to the root. Initially, we set \( \phi(v) = 0 \) for all \( v \in V \). For each node \( (H, d) \) in the decomposition tree, we skip further computation if \( (H, d) \) is a leaf, as all edges in \( H \) are non-negative. For a non-leaf node $(H,d)$, assume that we have already computed a $\phi$ such that all SCCs of \( (H \setminus S_{(H, d)})_\phi \) have non-negative weights. Using Lemma~\ref{lem:fix-dag}, we adjust $\phi$ so that all edges of \( (H \setminus S_{(H, d)})_\phi \) now have non-negative weights. 
We then run \textsc{BellmanFordDijkstra} on $H$ with potential function $\phi$, and use $w_{G'_{\ge 0}}$ as an auxiliary function to help detect a negative-weight cycle.

If \textsc{BellmanFordDijkstra} records a path with auxiliary weight more than $d$, then we terminate \textsc{BellmanFordDijkstra} early and recover such a path $P$ with endpoints $u$ and $v$. 
We then run Dijkstra's algorithm from \( v \) in \( G'_{\geq 0} \) to find a shortest path \( P' \) back to \( u \). We show that concatenating \( P \) and \( P' \) produces a negative-weight cycle in $G$. 

After processing all nodes at the current level, we update \( \phi \) and proceed to the next level. After processing all the levels, we guarantee that $G'_\phi$ has non-negative weight edges everywhere.

\subsection{Algorithm Pseudocode}

\begin{algorithm}[H]
\caption{\textsc{Scale} Algorithm}
\begin{algorithmic}[1]
\Function{Scale}{$G = (V, E, w_G), W$}
    \Comment{Input: Graph $G$ with $w_G(e) \geq -W$}
    \ForAll{edges $e \in E$}
        \State $w_{G'}(e) \gets w_G(e) + W/2$
    \EndFor
    \State $G' \gets (V, E, w_{G'})$

    \State \textbf{// Phase 1: Recursive Decomposition}
    \State $d_0 \gets nW/2$
    \State \Call{BuildDecompositionTree}{$G'$, $d_0$}

    \ForAll{leaf nodes $(H, d)$ in the decomposition tree}
        \If{\textbf{exists} edge $uv \in E(H)$ with $w_H(uv) < 0$}
            \State Run Dijkstra's algorithm from $v$ in $G'_{\geq 0}$
            \State Let $P$ be the path from $v$ to $u$ in $G'_{\geq 0}$
            \State \Return negative-weight cycle formed by $P + uv$
        \EndIf
    \EndFor

    \State \textbf{// Phase 2: Iterative Distance Computation}
    \State $\phi \gets \ $\Call{IterativeDistanceComputation}{$G', \text{DecompositionTree}$}

    \State \Return the final potential function $\phi$
\EndFunction
\end{algorithmic}
\end{algorithm}

\begin{algorithm}[H]
\caption{Phase 1: \textsc{BuildDecompositionTree}}
\begin{algorithmic}[1]
\Function{BuildDecompositionTree}{$H$, $d$}
    \State $S_{(H, d)} \gets$ \textsc{Decompose}($H$, $d$)
    \State Remove edges $S_{(H, d)}$ from $H$ to obtain SCCs $C_1, C_2, \dots, C_\ell$
    \ForAll{$i \in [\ell]$}
        \If{$|C_i| \leq \frac{3}{4}|V(H)|$}
            \State $d_i \gets d$
        \Else
            \State $d_i \gets d/2$
        \EndIf
        \If{$d_i > W/2$}
            \State \Call{BuildDecompositionTree}{$C_i$, $d_i$}
        \EndIf
    \EndFor
\EndFunction
\end{algorithmic}
\end{algorithm}

\begin{algorithm}[H]
\caption{Phase 2: \textsc{IterativeDistanceComputation}}
\begin{algorithmic}[1]
\Function{IterativeDistanceComputation}{$G', \text{DecompositionTree}$}
    \State Initialize $\phi(v) \gets 0$ for all $v \in V$
    \ForAll{levels $\ell$ from bottom to top in the decomposition tree}
        \ForAll{nodes $(H, d)$ at level $\ell$}
            \If{$(H, d)$ is a leaf}
                \State \textbf{Skip leaf nodes} \Comment{All edges are non-negative}
            \Else
                \State Adjust $\phi$ so DAG edges are non-negative \Comment{Lemma~\ref{lem:fix-dag}}
                \State Run \textsc{BellmanFordDijkstra}$(H,\phi)$ with auxiliary weight $w_{H_{\ge 0}}$, performing:
                \ForAll{iterations $i$ during \textsc{BellmanFordDijkstra}}
                    \State Compute values $d_i(v)$ and $d'_i(v)$ for all $v\in V$
                    \If{$d'_i(v) > d$ for some vertex $v$}
                        \State Recover the path $P$ from $u$ to $v$ with auxiliary weight exceeding $d$
                        \State Run Dijkstra's algorithm from $v$ in $G'_{\geq 0}$
                        \State Let $P'$ be the path from $v$ to $u$ in $G'_{\geq 0}$
                        \State \Return negative-weight cycle formed by $P + P'$
                    \EndIf
                \EndFor
                \State Update $\phi_{\text{nxt}}(v) \gets$ computed distance for all $v \in V(H)$
            \EndIf
        \EndFor
        \State $\phi \gets \phi_{\text{nxt}}$ \Comment{$H_\phi$ has non-negative weights for each $(H,d)$ on level $\ell$}
    \EndFor
    \State \Return the final potential function $\phi$
\EndFunction
\end{algorithmic}
\end{algorithm}

\subsection{Analysis}

Our ultimate goal is to prove the following theorem.

\begin{restatable}{theorem}{Master}\label{thm:master}
    There is an algorithm \textsc{Scale} such that, for any input graph \( G = (V, E, w_G) \) with edge weights \( w_G(e) \geq -W \) for all \( e \in E \), \textsc{Scale}$(G)$ either returns a reweighted graph \( G_\phi \) with edge weights at least \( -W/2 \), or outputs a negative-weight cycle in \( G \). The algorithm runs in expected time \( O\left( (m + \log\log n) \log^2 n \right) \) and succeeds with high probability.
\end{restatable}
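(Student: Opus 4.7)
My plan is to analyze the algorithm via the structure of the decomposition tree. I would first bound its depth by $O(\log n)$: along any root-to-leaf path, each Case~(i) step shrinks the vertex count by a factor of at least $4/3$ and each Case~(ii) step halves $d$, so starting from $d_0 = nW/2$ and stopping at $d \le W/2$, there are at most $O(\log n)$ steps of each type. The central structural claim I would prove by induction on the tree is that for any node $(H,d)$ and any $u,v\in V(H)$, there is a $v\to u$ path in $G'_{\ge 0}$ of weight at most $d$. For a Case~(ii) node this follows directly from the ball property of Lemma~\ref{lem:decompose}: intersecting the radius-$d/2$ out-ball of $v$ with the radius-$d/2$ in-ball of $u$ in the parent graph yields a $v\to u$ path of weight $\le d$ in the parent's $G'_{\ge 0}$. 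A Case~(i) node inherits the bound from its parent since $d$ is unchanged; the root handles it via the scale of $d_0$ relative to the polynomial diameter bound on $G'_{\ge 0}$.

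Given this lemma, Phase~1 correctness is short. Every leaf arises from Case~(ii) (a Case~(i) step does not decrease $d$ and hence cannot first push $d$ below $W/2$ starting from a parent that recursed with $d>W/2$), so leaves admit $v\to u$ paths of weight $\le W/2$ in $G'_{\ge 0}$. If a leaf contains a negative edge $uv$, then $w_G(uv)<-W/2$, and concatenating with Dijkstra's short return path (translated to $G$ using $w_G(e)\le w_{G'_{\ge 0}}(e)$) gives a negative cycle. For Phase~2, I would maintain the bottom-up invariant that after processing level $\ell$, every processed node has $H_\phi$ non-negative. Children provide non-negativity inside each SCC of $H\setminus S_{(H,d)}$, Lemma~\ref{lem:fix-dag} extends this across the DAG edges, and \textsc{BellmanFordDijkstra} together with the standard distance-as-potential update extends non-negativity across the $S_{(H,d)}$-edges as well.

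The main obstacle is the early-termination analysis. When $d'_i(v)>d$ fires, the recovered path $P$ from $u$ to $v$ satisfies $w_H(P)\le 0$ and $w_{H_{\ge 0}}(P)>d$; writing $w_H(P)=Q+N$ with $Q>d$ and $N\le -d$, and using that each negative-$w_H$ edge has $w_{G'}$-value in $[-W/2,0)$, forces $P$ to contain more than $2d/W$ negative edges. Therefore $w_G(P)=w_H(P)-|P|\cdot W/2<-d$. The structural lemma guarantees that Dijkstra in $G'_{\ge 0}$ from $v$ finds a return path $P'$ of weight $\le d$, and the translation $w_G(e)\le w_{G'_{\ge 0}}(e)$ yields $w_G(P')\le d$, so $w_G(P+P')<0$ is a genuine negative cycle. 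In the converse direction, for inputs with no negative cycle, the same structural lemma and the sparse-hitting guarantee jointly show that the auxiliary weights of BFD's candidate shortest paths stay $\le d$ throughout, so early termination does not falsely fire.

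Finally, the running time: the $V(H)$'s at each level partition $V$, so summing Lemma~\ref{lem:decompose}'s expected cost over a level gives $O((m+n\log\log n)\log n)$. For \textsc{BellmanFordDijkstra}, Lemma~\ref{lem:fix-dag} forces all negative-$H_\phi$ edges to lie in $S_{(H,d)}$, so $\eta_{H,\phi}(v)$ is dominated by the $S$-edges on the relevant shortest path; coupling sparse hitting with the structural lemma (which caps the auxiliary weight of the relevant path by $d$) yields $\mathbb{E}[\eta_{H,\phi}(v)]=O(\log n)$, hence per-node expected BFD cost $O((|E(H)|+|V(H)|\log\log n)\log n)$. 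Summing over the $O(\log n)$ levels gives the target total expected time $O((m+n\log\log n)\log^2 n)$.
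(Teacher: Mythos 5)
Your outline reproduces the easy half of the argument (depth bound, the diameter claim for nodes below the root, Phase~2 potential maintenance, and the correctness of the cycle output when early termination fires), but it has two genuine gaps. First, your structural lemma is not true at the root: you assert it holds "via the scale of $d_0$ relative to the polynomial diameter bound on $G'_{\ge 0}$," but $G'_{\ge 0}$ has no such diameter bound --- positive edge weights are unbounded (only the negative weights are bounded by $-W$), and $G'$ need not even be strongly connected. This is why the paper restricts the diameter statement to nodes with $d<d_0$ (Claim~\ref{claim:diameter}) and treats $d=d_0$ nodes by a different statement (Claim~\ref{claim:top}: every \emph{simple} path $P$ with $w_H(P)\le 0$ has $w_{H_{\ge 0}}(P)\le d_0$), which is what actually prevents spurious early termination at the root.

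The more serious gap is in the expected-time analysis of \textsc{BellmanFordDijkstra}. Your bound $\mathbb{E}[\eta_{H,\phi}(v)]=O(\log n)$ "by coupling sparse hitting with the structural lemma" is only valid when a shortest path to $v$ exists and has auxiliary weight at most $d$. If $G$ has a negative-weight cycle, some shortest paths in $H$ may have auxiliary weight exceeding $d$ (so the sparse-hitting guarantee, which applies only to paths of $H_{\ge 0}$-weight at most $d$, gives no bound on $\eta$), and if $H$ itself contains a negative cycle, shortest paths do not exist at all and Lemma~\ref{lem:BFD}'s running-time guarantee is vacuous. You prove that early termination, \emph{if} it fires, outputs a correct cycle, and that it cannot falsely fire when no negative cycle exists, but you never show that in these bad cases it fires within expected $O(\log n)$ iterations --- without that, BFD could run for $\Omega(n)$ iterations and the claimed $O((m+n\log\log n)\log^2 n)$ bound collapses. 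This is exactly the content of the paper's Lemma~\ref{lemma:neg_cycle_in_G} (Case~2) and Lemma~\ref{lemma:neg_cycle_in_G_prime}: one argues via a carefully chosen reference path $Q$ (minimizing auxiliary weight among shortest paths, or, when $H$ has a negative cycle, among paths of $w_H$-weight at most $M=-(d+1)\cdot|V(H)|\cdot W$, every such path provably having auxiliary weight $>d$), takes the shortest prefix $Q''$ whose auxiliary weight exceeds $d$, and shows by a minimality/exchange argument that within an expected $O(\log n)$ iterations BFD holds at the endpoint of $Q''$ a path of nonpositive weight and auxiliary weight $>d$, triggering termination; separately, negative cycles consisting entirely of non-positive edges survive the decomposition into a leaf and are caught in Phase~1, so such nodes are never processed. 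None of these ideas appear in your proposal, so the expected running time is not established.
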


To prove this theorem, we establish several intermediate results. 
Below, we say that a vertex subset $U\subseteq V$ has diameter at most $d$ in a graph $G=(V,E)$ if for any two vertices $u,v\in U$, there is a path in $G$ from $u$ to $v$ of weight at most $d$.
Throughout, we condition on the progress property of Lemma~\ref{lem:decompose} succeeding for all decompositions, which occurs with high probability.

\begin{claim}\label{claim:progress}
    Let $(H,d)$ be a non-leaf node in the decomposition tree, and let \( C \) be a strongly connected component (SCC) of \( H \setminus S_{(H,d)} \) such that \( |C| > \frac{3}{4}|V(H)| \). Then, with high probability, \( C \) has diameter at most \( \frac{d}{2} \) in \( H_{\geq 0} \) (and in \( G'_{\geq 0} \)).
\end{claim}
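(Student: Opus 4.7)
The approach is a direct pigeonhole argument using condition \emph{(ii)} of Lemma~\ref{lem:decompose}. Since $|C|>\tfrac34|V(H)|$, condition \emph{(i)} of the lemma fails for $C$, so (conditioning on the high-probability progress event) condition \emph{(ii)} must hold: for every $v\in C$, the in-ball and out-ball of radius $d/4$ centered at $v$ (measured in $H_{\ge 0}$, which is the natural setting for the decomposition) each contain more than $|V(H)|/2$ vertices.

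Given this, I would pick arbitrary $u,v\in C$ and consider the out-ball $B^{\mathrm{out}}_{d/4}(u)$ and the in-ball $B^{\mathrm{in}}_{d/4}(v)$ in $H_{\ge 0}$. Each contains more than $|V(H)|/2$ vertices, so by pigeonhole they share some common vertex $w$. By definition of the balls there is a path $u\rightsquigarrow w$ in $H_{\ge 0}$ of weight at most $d/4$ and a path $w\rightsquigarrow v$ in $H_{\ge 0}$ of weight at most $d/4$; concatenating them yields a walk from $u$ to $v$ in $H_{\ge 0}$ of weight at most $d/2$, which witnesses the claimed diameter bound. Taking $u,v$ arbitrary establishes the bound for all pairs in $C$.

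For the parenthetical claim about $G'_{\ge 0}$, I would simply note that $H$ is obtained from $G'$ by repeated restriction to SCCs after removing edge sets, so every edge of $H$ is an edge of $G'$ with the same weight. Consequently every $u$-to-$v$ walk in $H_{\ge 0}$ is also a $u$-to-$v$ walk in $G'_{\ge 0}$ of the same weight, and the diameter bound transfers verbatim.

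I do not anticipate a significant obstacle: the only subtlety is making sure the in-ball/out-ball statement from Lemma~\ref{lem:decompose}\emph{(ii)} is interpreted in $H_{\ge 0}$ (which is consistent with the sparse-hitting property being stated for $G_{\ge 0}$, and with the remark in the paper that this formulation is the one used inside the proof of Lemma~22 of~\cite{bringmann2023negativeweightsinglesourceshortestpaths}). Once that interpretation is fixed, the pigeonhole/concatenation argument is immediate.
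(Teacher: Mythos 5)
Your proposal is correct and follows essentially the same route as the paper's proof: invoke the progress property of Lemma~\ref{lem:decompose} (condition \emph{(ii)}, since $|C|>\tfrac34|V(H)|$ rules out condition \emph{(i)}), intersect the out-ball of $u$ with the in-ball of $v$ by pigeonhole, concatenate the two paths of weight at most $d/4$ in $H_{\ge 0}$, and transfer to $G'_{\ge 0}$ since $H_{\ge 0}\subseteq G'_{\ge 0}$. No gaps.
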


\begin{proof}
    Consider a SCC $C$ of \( H \setminus S_{(H,d)} \) with \( |C| > \frac{3}{4}|V(H)| \). Lemma~\ref{lem:decompose} guarantees that, for every vertex \( v \in C \), both the in-ball and out-ball of radius \( \frac{d}{4} \) centered at \( v \) in \( H_{\geq 0} \) contain more than \( \frac{|V(H)|}{2} \) vertices. 
    Let \( u, v \in C \). Since both the out-ball of radius \( \frac{d}{4} \) centered at \( u \) and the in-ball of radius \( \frac{d}{4} \) centered at \( v \) each contain more than \( \frac{|V(H)|}{2} \) vertices, their intersection must be non-empty.
    Therefore, there exists $w \in V(H)$ such that there exists a path from \( u \) to \( w \) of length at most \( \frac{d}{4} \) in \( H_{\geq 0} \), and a path from \( w \) to \( v \) of length at most \( \frac{d}{4} \) in \( H_{\geq 0} \).
    By concatenating these two paths, we obtain a path from \( u \) to \( v \) of length at most \( \frac{d}{2} \) in \( H_{\geq 0} \).
    Since \( H_{\geq 0} \subseteq G'_{\geq 0} \), this result extends to \( G'_{\geq 0} \). Therefore, the diameter of \( C \) is at most \( \frac{d}{2} \) in \( H_{\geq 0} \) (and in \( G'_{\geq 0} \)), as desired.
\end{proof}

\begin{claim}\label{claim:diameter}
Let \( (H,d) \) be any node in the decomposition tree with \( d < d_0 \). Then, with high probability, \( V(H) \) has diameter at most \( d \) in \( G'_{\geq 0} \).
\end{claim}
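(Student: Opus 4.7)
The plan is to induct on the depth of $(H, d)$ in the decomposition tree. Let $(H_p, d_p)$ denote the parent node, so that $H$ arises as an SCC of $H_p \setminus S_{(H_p, d_p)}$ and, by construction in \textsc{BuildDecompositionTree}, $d$ equals either $d_p$ (the ``small SCC'' case, $|V(H)| \le \tfrac{3}{4}|V(H_p)|$) or $d_p/2$ (the ``large SCC'' case). I would split the argument along these two cases.

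The large case is immediate: when $d = d_p/2$, Claim~\ref{claim:progress} applied to the parent says that every large SCC has diameter at most $d_p/2 = d$ in $H_{p,\ge 0}$, and since $H_{p,\ge 0}$ is a subgraph of $G'_{\ge 0}$, the same bound holds there. This also covers the base case of the induction: at depth one the parent is the root with parameter $d_0$, so $d \in \{d_0, d_0/2\}$, and the hypothesis $d < d_0$ forces $d = d_0/2$.

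The small case is where the induction pulls its weight. When $d = d_p$ with $d < d_0$, Claim~\ref{claim:progress} gives no direct diameter bound on $V(H)$, so I would appeal to the inductive hypothesis: since $d_p = d < d_0$, the parent $(H_p, d_p)$ is itself a non-root node satisfying the claim's premise at strictly smaller depth, and by induction $V(H_p)$ has diameter at most $d_p = d$ in $G'_{\ge 0}$. Containment $V(H) \subseteq V(H_p)$ then transfers the diameter bound to $V(H)$. The step I expect to be the main conceptual obstacle is recognizing that this transfer is legitimate: the diameter in the claim is measured by paths in $G'_{\ge 0}$ that may freely leave $V(H)$, so a small SCC whose \emph{internal} structure has large diameter is nevertheless fine, because its vertices can be reached by detours through $V(H_p) \setminus V(H)$ or even through $V(G') \setminus V(H_p)$.
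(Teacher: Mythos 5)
Your proposal is correct and follows essentially the same route as the paper's proof: induct over the tree, invoke Claim~\ref{claim:progress} on the parent when the parameter was halved (the large-SCC case), and in the small-SCC case apply the inductive hypothesis to the parent (which shares the same $d < d_0$) and transfer the bound via $V(H) \subseteq V(H_p)$, which is valid precisely because diameter is measured in $G'_{\ge 0}$. Your explicit handling of the base case and of why the containment transfer is legitimate matches what the paper leaves implicit.
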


\begin{proof}
    Since \( (H, d) \) is not the root, \( H \) is a strongly connected component (SCC) of its parent \( H' \).
    If \( V(H) > \frac{3}{4}|V(H')| \), then by Claim~\ref{claim:progress}, \( V(H) \) has diameter at most \( d \) in \( H_{\geq 0} \), and hence in \( G'_{\geq 0} \). Otherwise, its parent node \( (H', d) \) has the same diameter parameter \( d < d_0 \). By induction, \( V(H') \) has diameter at most \( d \) in \( G'_{\geq 0} \). Since \( V(H) \) is a subset of \( V(H') \), it also has diameter at most \( d \) in $G'_{\ge 0}$.
    Therefore, in both cases, \( V(H) \) has diameter at most \( d \) in \( G'_{\geq 0} \), as desired.
\end{proof}

\begin{claim}\label{claim:top}
    Let $(H,d)$ be any node in the decomposition tree with \( d = d_0 \). Then, for all simple paths \( P \) in \( H \) with \( w_H(P) \leq 0 \), we have \( w_{H_{\geq 0}}(P) \leq d \).
\end{claim}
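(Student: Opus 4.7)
The plan is to bound $w_{H_{\geq 0}}(P)$ by exploiting the assumption $w_H(P) \leq 0$ together with the lower bound on edge weights in the scaled graph $G'$. Recall that by construction $w_{G'}(e) = w_G(e) + W/2 \geq -W/2$ for every edge $e$, and since $H$ is a (sub)graph of $G'$, the same bound $w_H(e) \geq -W/2$ holds on every edge of $H$.

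First, I would split the edges of $P$ into the non-negative edges $E^+ \subseteq E(P)$ and the negative edges $E^- \subseteq E(P)$. Since $H_{\geq 0}$ simply zeroes out the negative-weight edges,
\[
w_{H_{\geq 0}}(P) \;=\; \sum_{e \in E^+} w_H(e).
\]
Next, I would use the hypothesis $w_H(P) \leq 0$, which rearranges to
\[
\sum_{e \in E^+} w_H(e) \;\leq\; -\sum_{e \in E^-} w_H(e) \;=\; \sum_{e \in E^-} |w_H(e)|.
\]

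Finally, since $P$ is simple it has at most $n-1$ edges, and hence $|E^-| \leq n-1$; combined with the per-edge bound $|w_H(e)| \leq W/2$ for $e \in E^-$, I obtain
\[
w_{H_{\geq 0}}(P) \;\leq\; \sum_{e \in E^-} |w_H(e)| \;\leq\; (n-1)\cdot \frac{W}{2} \;\leq\; \frac{nW}{2} \;=\; d_0 \;=\; d,
\]
which is exactly the claim. There is no real obstacle here; the only things to be careful about are (i) remembering that the scaling guarantees the $-W/2$ lower bound on edge weights in $G'$ (so also in any descendant $H$), and (ii) using simplicity of $P$ to get the $n-1$ edge bound rather than a bound on some shortest path.
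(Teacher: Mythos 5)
Your proof is correct and follows essentially the same route as the paper: both bound the total absolute negative weight on the simple path by (number of edges)$\cdot W/2 \leq nW/2 = d_0$, using the $-W/2$ lower bound from the scaling, and then use $w_H(P)\leq 0$ to conclude $w_{H_{\geq 0}}(P)\leq d$. The only cosmetic difference is that you phrase it via the sum of non-negative edges versus the paper's identity $w_{H_{\geq 0}}(P) = w_H(P) + \lvert \mathrm{neg}_H(P)\rvert$, which are the same calculation.
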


\begin{proof}
    Let \( \text{neg}_H(P)\le 0 \) denote the total negative weight (i.e., sum of negative-weight edges) of \( P \) in \( H \). For a simple path \( P \) in \( H \) with \( w_H(P) \leq 0 \), each negative-weight edge in \( H \) has weight at least \( -W/2 \), and \( P \) contains at most \( \lvert V(H) \rvert \leq n \) edges. Therefore, \( \text{neg}_H(P) \geq -W/2 \cdot n = -d_0 \).
    In \( H_{\geq 0} \), all negative-weight edges are replaced with edges of weight 0. Hence, the weight of \( P \) in \( H_{\geq 0} \) is given by \( w_{H_{\geq 0}}(P) = w_H(P) + \lvert \text{neg}_H(P) \rvert \). Substituting \( w_H(P) \leq 0 \) and \( \lvert \text{neg}_H(P) \rvert \leq W/2 \cdot n = d \), we find \( w_{H_{\geq 0}}(P) \leq d \), as desired.
\end{proof}

\begin{claim}\label{claim:nonegcycles}
    Let $(H,d)$ be a node in the decomposition tree with \( d < d_0 \). With high probability, if there exists a path \( P \) in \( H \) such that \( w_H(P) \leq 0 \) but \( w_{H_{\geq 0}}(P) > d \), then  concatenating $P$ with a shortest path $P'$ in $G'_{\ge 0}$ from the end of $P$ back to its start produces a negative-weight cycle in $G$. In particular, if \( G \) has no negative-weight cycles, then for all (potentially non-simple) paths \( P \) in \( H \) with \( w_H(P) \leq 0 \), we have \( w_{H_{\geq 0}}(P) \leq d \).
\end{claim}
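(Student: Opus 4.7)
The plan is to exhibit the required negative-weight cycle by closing $P$ with a short auxiliary path $P'$ and then transferring weights back from $G'$ to $G$ via the uniform shift $w_{G'} = w_G + W/2$. Let $u$ and $v$ be the start and end of $P$. Since $d < d_0$, Claim~\ref{claim:diameter} applies to $(H,d)$ and guarantees that $V(H)$ has diameter at most $d$ in $G'_{\ge 0}$; pick $P'$ to be any $v$-to-$u$ path in $G'_{\ge 0}$ with $w_{G'_{\ge 0}}(P') \le d$, and form the closed walk $C = P + P'$. Via the identity $w_G(C) = w_{G'}(C) - |E(C)| \cdot W/2$, the target $w_G(C) < 0$ reduces to showing $w_{G'}(C) < |E(C)| \cdot W/2$.

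The upper bound $w_{G'}(C) \le d$ is the easy half: the hypothesis $w_H(P) \le 0$ gives $w_{G'}(P) \le 0$, and zeroing out negative edges only raises weights, so $w_{G'}(P') \le w_{G'_{\ge 0}}(P') \le d$.

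The main obstacle is the matching lower bound $|E(C)| > 2d/W$, and this is where both the remaining hypothesis $w_{H_{\ge 0}}(P) > d$ and the global assumption $w_G(e) \ge -W$ must be used. I would partition $E(P) = A \sqcup B$ according to whether $w_{G'}(e) \ge 0$ or $w_{G'}(e) < 0$. Edges in $B$ contribute $0$ in $H_{\ge 0}$, so $w_{G'_{\ge 0}}(P) = w_{G'}(A) > d$; combined with $w_{G'}(A) + w_{G'}(B) = w_{G'}(P) \le 0$, this forces $w_{G'}(B) < -d$. Now the input bound enters: each $e \in B$ satisfies $w_{G'}(e) = w_G(e) + W/2 \ge -W/2$, so $w_{G'}(B) \ge -|B| \cdot W/2$, which forces $|B| > 2d/W$. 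Since $|E(C)| \ge |E(P)| \ge |B|$, this yields $|E(C)| \cdot W/2 > d \ge w_{G'}(C)$, and hence $w_G(C) < 0$ as required.

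The ``in particular'' clause is then immediate by contrapositive: if $G$ has no negative-weight cycle, no path $P$ in $H$ can satisfy both hypotheses simultaneously, so $w_H(P) \le 0$ forces $w_{H_{\ge 0}}(P) \le d$.
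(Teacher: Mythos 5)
Your proof is correct and takes essentially the same route as the paper's: invoke Claim~\ref{claim:diameter} to close $P$ with a path $P'$ of $G'_{\ge 0}$-weight at most $d$, and use the $W/2$ shift between $G$ and $G'$ to conclude the resulting cycle is negative in $G$. The only difference is bookkeeping: the paper bounds $w_G(P) < -d$ directly from the total negative weight of $P$ in $H$ (using $|w_H(e)| \le W/2$ on negative edges), whereas you count the negative edges ($|B| > 2d/W$) and subtract $W/2$ per edge of the whole cycle --- the same computation arranged slightly differently.
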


\begin{proof}
    Suppose there exists a path \( P \) in \( H \) such that \( w_H(P) \leq 0 \) but \( w_{H_{\geq 0}}(P) > d \). Let \( \text{neg}_H(P) \) denote the total negative weight of \( P \) in \( H \). The weight of \( P \) in \( H_{\geq 0} \) is given by \( w_{H_{\geq 0}}(P) = w_H(P) + \lvert \text{neg}_H(P) \rvert \). Since \( w_H(P) \leq 0 \) and \( w_{H_{\geq 0}}(P) > d \), it follows that \( \lvert \text{neg}_H(P) \rvert > d \).
    For each negative-weight edge $e$ in \( H \), its weight in $G$ is $w_G(e)=w_H(e)-W/2\le w_H(e)-|w_H(e)|$, so the weight of \( P \) in \( G \) is \( w_G(P) \le w_H(P) - \lvert \text{neg}_H(P) \rvert \). Substituting \( w_H(P) \leq 0 \) and \( \lvert \text{neg}_H(P) \rvert > d \), we find \( w_G(P) < -d \).
    By Claim~\ref{claim:diameter}, \( V(H) \) has diameter at most \( d \) in \( G'_{\geq 0} \), so the shortest path \( P' \) in \( G'_{\geq 0} \) from the end of \( P \) back to its start satisfies \( w_{G'_{\geq 0}}(P') \leq d \). Since edge weights can only be smaller in $G$, we also have $w_G(P')\le d$. Combining \( P \) and \( P' \) forms a cycle \( C = P + P' \) in \( G \) with \( w_G(C) = w_G(P) + w_G(P') < -d + d = 0 \), so $C$ is a negative-weight cycle.
\end{proof}

\begin{claim}\label{claim:sparsehitting}
    Let $(H,d)$ be a non-leaf node in the decomposition tree. Then, for any (potentially non-simple) path \( P \) in \( H \) with \( w_{H_{\geq 0}}(P) \leq d \), the number of edges of $P$ inside $S_{(H,d)}$ (counting multiplicity) has expectation at most \( O(\log n) \).
\end{claim}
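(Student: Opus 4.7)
The plan is to observe that this claim is essentially an immediate consequence of the Sparse Hitting guarantee (condition (2)) of Lemma~\ref{lem:decompose}, applied to the graph $H$ with the parameter $d$ that was used to produce $S_{(H,d)}$. So the argument is mostly a matter of checking that the hypotheses of that lemma match what we have here.

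First I would note that since $(H,d)$ is a non-leaf node, the set $S_{(H,d)}$ was produced by invoking $\textsc{Decompose}(H,d)$. Lemma~\ref{lem:decompose} then guarantees that for any (potentially non-simple) path $Q$ in $H_{\ge 0}$ of weight at most $d$, the number of edges of $Q$ that land in $S_{(H,d)}$, counted with multiplicity, has expectation $O(\log n)$.

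Next I would verify that the path $P$ in the claim qualifies as such a $Q$. The edge set of a path in $H$ is identical to the edge set of the same sequence of edges viewed in $H_{\ge 0}$; the two graphs differ only in weights, with negative-weight edges of $H$ reweighted to $0$ in $H_{\ge 0}$. Hence $P$ is a valid path in $H_{\ge 0}$, and by hypothesis its weight there is $w_{H_{\ge 0}}(P) \le d$. This is exactly the input requirement for the Sparse Hitting property.

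Finally, I would apply that property to $P$ to conclude that the number of edges of $P$ inside $S_{(H,d)}$, counting multiplicity, has expectation $O(\log n)$, finishing the claim. There is no real obstacle in this proof; the only subtlety worth stating is the remark about reinterpreting the same edge sequence as a path in $H_{\ge 0}$, which is what lets the Sparse Hitting guarantee (stated for paths in $G_{\ge 0}$) apply directly.
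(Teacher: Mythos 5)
Your proposal is correct and matches the paper's argument, which simply cites the Sparse Hitting property of Lemma~\ref{lem:decompose} applied to the call $\textsc{Decompose}(H,d)$ that produced $S_{(H,d)}$. Your extra remark that a path in $H$ with $w_{H_{\ge 0}}(P)\le d$ is exactly a path of weight at most $d$ in $H_{\ge 0}$ is a fair (if implicit in the paper) verification of the hypothesis.
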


\begin{proof}
Follows directly from Lemma~\ref{lem:decompose}.
\end{proof}

We now proceed to analyze the runtime and correctness of the \textsc{Scale} algorithm by considering whether $H$ contains a negative cycle. Below, let \( n_H = |V(H)| \) and \( m_H = |E(H)| \) denote the number of vertices and edges in $H$.

\begin{lemma}\label{lemma:neg_cycle_in_G}
    Consider a node $(H,d)$, and suppose \( H \) does not contain a negative-weight cycle. Then, \textsc{BellmanFordDijkstra}$(H,\phi)$ runs in expected time \(O\left( (m_H + n_H \log \log n_H) \log n \right) \).
\end{lemma}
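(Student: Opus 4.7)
My approach is to apply Lemma~\ref{lem:BFD} directly. That lemma bounds the runtime of \textsc{BellmanFordDijkstra}$(H,\phi)$ by $O(\sum_v (\deg(v) + \log\log n_H)\cdot \eta_{H,\phi}(v))$, so by linearity of expectation it suffices to establish $E[\eta_{H,\phi}(v)] = O(\log n)$ for each vertex $v \in V(H)$, under the assumption that $(H,d)$ is a non-leaf node (the only case in which \textsc{BellmanFordDijkstra} is actually invoked during Phase~2).

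The first key observation is that, immediately before \textsc{BellmanFordDijkstra} is invoked at $(H,d)$, the potential $\phi$ has been adjusted via Lemma~\ref{lem:fix-dag} so that every edge of $(H \setminus S_{(H,d)})_\phi$ is non-negative. Hence, the only edges of $H$ that can have negative weight in $H_\phi$ are those lying in the cut set $S_{(H,d)}$. In particular, for any shortest path $P$ in $H$ ending at $v$, the number of negative-weight edges of $P$ in $H_\phi$ is at most $|P \cap S_{(H,d)}|$, so $\eta_{H,\phi}(v) \leq |P \cap S_{(H,d)}|$ for any such path $P$ of our choice.

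Now I would fix a shortest path $P$ in $H$ ending at $v$; since the empty path at $v$ is a candidate we have $w_H(P) \leq 0$, and since $H$ has no negative-weight cycle we may take $P$ to be simple (removing any cycle only decreases or preserves the weight). If $d < d_0$, the ``no negative cycle'' conclusion of Claim~\ref{claim:nonegcycles} gives $w_{H_{\geq 0}}(P) \leq d$; if $d = d_0$, Claim~\ref{claim:top} gives the same bound precisely because $P$ is simple. In either case, Sparse Hitting (Claim~\ref{claim:sparsehitting}) applies to $P$ and yields $E[|P \cap S_{(H,d)}|] = O(\log n)$, whence $E[\eta_{H,\phi}(v)] = O(\log n)$.

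Putting it together, the expected running time is bounded by
\[
\sum_{v \in V(H)} (\deg(v) + \log\log n_H) \cdot O(\log n) \;=\; O\big((m_H + n_H\log\log n_H)\log n\big),
\]
as required. The main delicate step is the case split on $d = d_0$ versus $d < d_0$ and arranging for a simple shortest path so that Claim~\ref{claim:top} is applicable in the root case; once the structural identification ``negative edges of $H_\phi$ live inside $S_{(H,d)}$'' is in hand, Sparse Hitting plugs in directly and the rest is a routine expectation calculation.
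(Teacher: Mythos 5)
There is a genuine gap: you implicitly replace the lemma's hypothesis ``$H$ has no negative-weight cycle'' with the much stronger hypothesis ``$G$ has no negative-weight cycle.'' The ``in particular'' conclusion of Claim~\ref{claim:nonegcycles} that you invoke for $d<d_0$ is stated under the assumption that \emph{$G$} has no negative-weight cycles; when such a path $P$ with $w_H(P)\le 0$ and $w_{H_{\ge 0}}(P)>d$ exists, the cycle it produces uses a return path $P'$ through $G'_{\ge 0}$ that may leave $H$ entirely, so its existence is perfectly consistent with $H$ being free of negative cycles. Taking $P$ simple does not rescue the bound either: simplicity only gives $|\mathrm{neg}_H(P)|\le n_H W/2=d_0$, which is useless once $d<d_0$. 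Consequently, in the case where $G$ has a negative cycle not contained in $H$, there may be a vertex $v$ all of whose shortest paths in $H$ have $H_{\ge 0}$-weight exceeding $d$; Sparse Hitting (Claim~\ref{claim:sparsehitting}) then says nothing about $\eta_{H,\phi}(v)$, and your bound $\mathbb{E}[\eta_{H,\phi}(v)]=O(\log n)$ has no justification.

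This missing case is exactly why the algorithm carries the auxiliary weights $w'$ and terminates \textsc{BellmanFordDijkstra} early when some $d_i'(v)>d$, a mechanism your argument never uses. The paper's proof splits accordingly: when every vertex admits a shortest path of $H_{\ge 0}$-weight at most $d$, it argues as you do; otherwise it fixes a shortest path $Q$ to the offending vertex minimizing $w_{H_{\ge 0}}(Q)$, truncates it to the shortest prefix $Q''$ whose $H_{\ge 0}$-weight exceeds $d$ (so $Q''$ still has expected $O(\log n)$ edges in $S_{(H,d)}$), and shows that within an expected $O(\log n)$ iterations the hybrid algorithm records a path triggering the auxiliary-weight check, at which point it halts and outputs a negative-weight cycle in $G$; the runtime bound then comes from the per-iteration cost of Lemma~\ref{lem:BFD}, not from bounding $\eta_{H,\phi}$. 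To repair your proof you would need to add this early-termination analysis (or an equivalent argument) for the case that $G$, but not $H$, contains a negative-weight cycle; your treatment of the remaining case, including the root via Claim~\ref{claim:top} and the observation that negative edges of $H_\phi$ lie in $S_{(H,d)}$, is consistent with the paper.
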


\begin{proof}
    There are two cases to consider:

    \textbf{Case 1:} For each vertex $v\in V(H)$, there exists a shortest path to $v$ of weight at most $d$ in $H_{\ge 0}$. In this case, the shortest path $P$ to $v$ with \( w_{H_{\geq 0}}(P) \leq d \) has an expected $O(\log n)$ edges inside $S_{(H,d)}$ by Claim~\ref{claim:sparsehitting}. Since $H_\phi$ has non-negative weight edges outside of $S_{(H,d)}$, we have $\mathbb E[\eta_{H,\phi}(v)]\le O(\log n)$ for the parameter $\eta_{H,\phi}$ defined in Lemma~\ref{lem:BFD}. By Lemma~\ref{lem:BFD}, \textsc{BellmanFordDijkstra} runs in expected time \( O(\sum_v(\deg_H(v)+\log\log n)\cdot\eta_{H,\phi}(v)), \) which has expectation \( O((m_H + n_H \log \log n_H) \log n) \) since \( \mathbb E[\eta_{H,\phi}(v)]\le O(\log n) \) for all $v\in V(H)$. Note that \textsc{BellmanFordDijkstra} can still terminate early and output a negative-weight cycle in $G$, but that only speeds up the running time.

    \textbf{Case 2:} For some vertex $v\in V(H)$, every shortest path to $v$ has weight more than $d$ in $H_{\ge 0}$. By Claim~\ref{claim:nonegcycles}, this implies the existence of a negative-weight cycle in \( G \). Among all shortest paths to $v$, let $Q$ be one minimizing $w_{H_{\ge 0}}(Q)$, which is still greater than $d$. Take the longest prefix $Q'$ of $Q$ with weight at most $d$ in $H_{\ge 0}$. Since $w_{H_{\ge 0}}(Q')\le d$, the number of edges of $Q'$ inside $S_{(H,d)}$ has expectation \( O(\log n) \) by Claim~\ref{claim:sparsehitting}. Let $Q''$ be the path $Q'$ concatenated with the next edge in $Q$, so that $w_{H_{\ge 0}}(Q'')>d$ and the number of edges of $Q''$ inside $S_{(H,d)}$ still has expectation \( O(\log n) \). Since $Q''$ is a prefix of shortest path $Q$, it is also a shortest path. Let $u\in V(H)$ be the endpoint of $Q''$, and let $i$ be the number of iterations of \textsc{BellmanFordDijkstra} before $d_i(u)= w_H(Q'')$, so that $i$ has expected value $O(\log n)$. After $i$ iterations, \textsc{BellmanFordDijkstra} finds a shortest path $P$ ending at $u$ with $w_H(P)=d_i(u)=w_H(Q'')$. Moreover, $w_{H_{\ge 0}}(P)\ge w_{H_{\ge 0}}(Q'')$ since otherwise, replacing the prefix $Q''$ of $Q$ by $P$ results in a shortest path to $v$ of lower weight in $H_{\ge 0}$, contradicting the assumption on $Q$. It follows that after an expected $O(\log n)$ iterations of \textsc{BellmanFordDijkstra}, a path $P$ is found with $w_H(P)\le 0$ and $w_{H_{\ge 0}}(P)>d$, and the algorithm terminates with a negative-weight cycle. The runtime is $O((m_H + n_H \log \log n_H) \cdot i)$, which has expectation $O((m_H + n_H \log \log n_H)\log n)$.
\end{proof}

\begin{lemma}\label{lemma:neg_cycle_in_G_prime}
Consider a node $(H,d)$, and suppose \( H \) contains a negative-weight cycle. Then, \textsc{BellmanFordDijkstra}$(H,\phi)$ runs in expected time \(O\left( (m_H + n_H \log \log n_H) \log n \right) \).
\end{lemma}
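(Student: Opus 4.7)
The plan is to adapt the Case~2 argument from the proof of Lemma~\ref{lemma:neg_cycle_in_G}, replacing the ``minimum-auxiliary shortest path'' (which no longer exists because $H$ has a negative cycle) by a minimum-auxiliary \emph{walk} subject to a sufficiently negative $w_H$ threshold. A structural ingredient, inherited from the bottom-up processing of Phase~2, is that every negative cycle in $H$ must contain an edge of $S_{(H,d)}$: otherwise the cycle would sit inside some child SCC of $H \setminus S_{(H,d)}$, contradicting the inductive assumption that each child was already processed without triggering early termination. Since $S_{(H,d)}$ consists of positive-weight edges of $H$, it follows that every negative cycle in $H$ has $w_{H_{\ge 0}}$-weight at least $1$.

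First I would fix a threshold $T$ negative enough (for instance $T = -d_0(d+1)-1$) that every walk $W$ from $s$ with $w_H(W) \le T$ satisfies $w_{H_{\ge 0}}(W) > d$. Decomposing $W$ into a simple path (with $w_H \ge -d_0$) plus simple cycles, and noting that non-negative cycles only raise $w_H$ and the aux-weight, the negative cycles of the decomposition together account for at most $T + d_0$ in $w_H$; since each negative simple cycle has $w_H \ge -d_0$ and $w_{H_{\ge 0}} \ge 1$, such a decomposition must include at least $(|T|-d_0)/d_0 > d$ negative cycles, so $w_{H_{\ge 0}}(W) > d$. Let $Q$ be a walk from $s$ that minimizes $w_{H_{\ge 0}}$ subject to $w_H(Q) \le T$ (such a $Q$ exists: iterating $C^*$ drives $w_H$ arbitrarily negative, and once the aux is bounded the walk length is bounded too, so the minimum is attained). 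Take the longest prefix $Q'$ of $Q$ with $w_{H_{\ge 0}}(Q') \le d$ and set $Q'' = Q' + (\text{next edge of } Q)$, ending at some vertex $v''$; since $w_{H_{\ge 0}}(Q) > d$, the prefix is proper and $w_{H_{\ge 0}}(Q'') > d$.

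By Claim~\ref{claim:sparsehitting} applied to $Q'$, we have $\mathbb{E}[|Q' \cap S_{(H,d)}|] = O(\log n)$, so $\mathbb{E}[|Q'' \cap S_{(H,d)}|] = O(\log n)$ as well; since the negative edges of $H_\phi$ lie inside $S_{(H,d)}$, the number of negative edges of $Q''$ in $H_\phi$ has expectation $O(\log n)$. Letting $i$ be one more than this count, Lemma~\ref{lem:BFD} implies that after iteration~$i$, $w_H(P_i(v'')) \le w_H(Q'')$. The main obstacle---the analogue of the min-auxiliary argument in Case~2---will be to conclude $w_{H_{\ge 0}}(P_i(v'')) \ge w_{H_{\ge 0}}(Q'') > d$: if instead $w_{H_{\ge 0}}(P_i(v''))$ were strictly smaller, then letting $R$ be the suffix of $Q$ beyond $Q''$, the walk $P_i(v'') + R$ would end at the same vertex as $Q$, have $w_H \le w_H(Q'') + w_H(R) = w_H(Q) \le T$ (so it is eligible for the optimization defining $Q$), and have strictly smaller $w_{H_{\ge 0}}$ than $Q$, contradicting the minimum-aux choice of $Q$. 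Hence $d'_i(v'') > d$ triggers early termination, and since $\mathbb{E}[i] = O(\log n)$, Lemma~\ref{lem:BFD}'s bound $O((m_H + n_H \log\log n_H)\cdot i)$ on the first $i$ iterations yields the claimed $O((m_H + n_H \log\log n_H)\log n)$ expected runtime.
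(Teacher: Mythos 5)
Your main engine is the paper's Case~2 argument almost verbatim: a sufficiently negative $w_H$-threshold forcing aux-weight above $d$, a walk $Q$ minimizing $w_{H_{\ge 0}}$ subject to that threshold, the longest prefix of aux-weight at most $d$ plus one edge, Claim~\ref{claim:sparsehitting} to bound the expected number of $S_{(H,d)}$-edges (hence negative edges of $H_\phi$) on it, and the exchange argument forcing $d'_i(v'')>d$ within expected $O(\log n)$ iterations of Lemma~\ref{lem:BFD}. That part is correct.

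Where you deviate is the structural ingredient that every negative cycle of $H$ has positive aux-weight, and your justification has a gap. You derive it from the stronger claim that every negative cycle of $H$ meets $S_{(H,d)}$, "contradicting the inductive assumption that each child was already processed without triggering early termination." For a leaf child this gives no contradiction at all: leaves are skipped in Phase~2, so they are trivially "processed without early termination"; the contradiction must instead come from the Phase~1 scan (a negative cycle inside a leaf contains a negative edge, so \textsc{Scale} would have returned before Phase~2 began), which you never invoke. For non-leaf children, the induction you appeal to needs the invariant that processing a node containing a negative cycle \emph{deterministically} triggers early termination --- stronger than the expected-time statement of the lemma being proved; your own argument does supply it (the iteration index $i$ is finite with probability one), but you never formulate or carry out this induction, so as written the appeal is circular. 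All of this can be replaced by the paper's simpler observation, which is also all your threshold computation needs: if some negative cycle of $H$ consisted only of non-positive-weight edges, then, since \textsc{Decompose} removes only positive-weight edges, that cycle would survive into a leaf, and Phase~1 would have output a negative cycle before $(H,d)$ was ever processed; hence when $(H,d)$ is reached, every negative cycle of $H$ contains a positive-weight edge and so has positive weight in $H_{\ge 0}$ --- no induction and no reference to $S_{(H,d)}$ required. Two smaller points: your existence argument for $Q$ ("once the aux is bounded the walk length is bounded too") fails in the presence of zero-aux-weight edges --- attainment should instead be argued from the fact that aux-weights of walks lie in a discrete set bounded below --- and the cycle $C^*$ is used without being introduced.
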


\begin{proof}
    There are two cases to consider:

    \textbf{Case 1:} \( H \) contains a negative-weight cycle consisting entirely of non-positive-weight edges.  
    Since \textsc{Decompose} only removes positive-weight edges, such a cycle remains intact throughout the decomposition and persists in one of the leaf nodes. A negative-weight edge from this cycle is found in this leaf node, and the algorithm computes a negative-weight cycle and terminates early. In particular, the node $(H,d)$ is never processed. 

    \textbf{Case 2:} All negative-weight cycles in \( H \) include at least one positive-weight edge. This case is the most subtle and requires carefully defining a reference path $Q$.

Define $M=-(d+1)\cdot|V(H)|\cdot W$. We first claim that any (potentially non-simple) path in $H$ of weight at most $M$ in $H$ must have weight exceeding $d$ in $H_{\ge 0}$. Given such a path, we first remove any cycles on the path of non-negative weight in $H$. The remaining path still has weight at most $M=-(d+1)\cdot|V(H)|\cdot W$, so it must have at least $(d+1)\cdot|V(H)|$ edges, which means it can be decomposed into at least $d+1$ negative-weight cycles, each with at least one positive-weight edge in $H_{\ge 0}$. It follows that the path has weight at least $d+1$ in $H_{\ge 0}$, concluding the claim.

Among all paths of weight at most $M$ in $H$, let $Q$ be one with minimum possible weight in $H_{\ge 0}$, which must be greater than $d$. Take the longest prefix $Q'$ of $Q$ with weight at most $d$ in $H_{\ge 0}$. Since $w_{H_{\ge 0}}(Q')\le d$, the number of edges of $Q'$ inside $S_{(H,d)}$ has expectation \( O(\log n) \) by Claim~\ref{claim:sparsehitting}. Let $Q''$ be the path $Q'$ concatenated with the next edge in $Q$, so that $w_{H_{\ge 0}}(Q'')>d$ and the number of edges of $Q''$ inside $S_{(H,d)}$ still has expectation \( O(\log n) \). Let $u\in V(H)$ be the endpoint of $Q''$, and let $i$ be the number of iterations of \textsc{BellmanFordDijkstra} before $d_i(u)\le w_H(Q'')$, so that $i$ has expected value $O(\log n)$. After $i$ iterations, \textsc{BellmanFordDijkstra} finds a path $P$ ending at $u$ with $w_H(P)=d_i(u)\le w_H(Q'')$. Moreover, $w_{H_{\ge 0}}(P)\ge w_{H_{\ge 0}}(Q'')$ since otherwise, replacing the prefix $Q''$ of $Q$ by $P$ results in a path of lower weight in $H_{\ge 0}$ and of weight at most $M$ in $H$, contradicting the assumption on $Q$. It follows that after an expected $O(\log n)$ iterations of \textsc{BellmanFordDijkstra}, a path $P$ is found with $w_H(P)\le 0$ and $w_{H_{\ge 0}}(P)>d$, and the algorithm terminates with a negative-weight cycle. The runtime is $O((m_H + n_H \log \log n_H) \cdot i)$, which has expectation $O((m_H + n_H \log \log n_H)\log n)$.
\end{proof}

We can now restate and prove Theorem~\ref{thm:master}.
\Master*

\begin{proof}
    We first consider the decomposition phase. For each node \( (H, d) \), the decomposition takes expected time \( O((m_H + n_H \log \log n_H) \log n) \) by Lemma~\ref{lem:decompose}. At each level of the decomposition tree, the total number of edges and vertices across all nodes is bounded by \( m \) and \( n \), respectively, so the expected runtime per level is \( O((m + n \log \log n) \log n) \). With \( O(\log n) \) levels in the decomposition tree, the total expected runtime is \( O((m + n \log \log n) \log^2 n) \).

    For processing each node $(H,d)$, Lemma~\ref{lem:fix-dag} adjusts $\phi$ in $O(n_H+m_H)$ time, and Lemmas~\ref{lemma:neg_cycle_in_G} and \ref{lemma:neg_cycle_in_G_prime} establish an expected runtime of \( O((m_H + n_H \log \log n_H) \log n) \) for \textsc{BellmanFordDijkstra}. There may be an additional $O(m+n\log\log n)$ time to output a negative-weight cycle, but this only happens once. Again, summing over all nodes $(H,d)$, the total expected runtime is \( O((m + n \log \log n) \log^2 n) \).

    To convert distances in \( G' \) into a reweighted graph \( G_\phi \) with edge weights at least \( -W/2 \), we note that setting \( \phi \) as the distances in $G'$ makes \( G'_\phi \) non-negative. Since \( w_{G'}(e) = w_G(e) + W/2 \) for all $e\in E$, it follows that \( w_{G_\phi}(e) = w_{G'_\phi}(e)-w_{G'}(e)+w_G(e) \geq -W/2 \), as desired.
\end{proof}

\bibliographystyle{alpha}
\bibliography{references}

\end{document}